\documentclass[runningheads]{llncs}

\usepackage[T1]{fontenc}
\usepackage{graphicx}
\usepackage{amsmath,amssymb}
\usepackage{subcaption,booktabs,multirow}

\begin{document}
\title{Duration-Aware Dissimilarity in Temporal Networks}
%
%
\author{Giuseppe F. Italiano\inst{1} 
\and
Anna Mpanti\inst{2}
}
\authorrunning{Giuseppe F. Italiano et al.}
%
\institute{LUISS University, Italy
\email{gitaliano@luiss.it} 
 \and University of Ioannina, Greece
 \email{a.mpanti@uoi.gr}
}

\maketitle              

\begin{abstract}
Temporal-network comparison becomes substantially more challenging when interactions are associated with explicit traversal times, since both event timing and duration determine temporal reachability and journey performance. We introduce a duration-aware dissimilarity framework for temporal networks whose edges carry departure times and traversal durations. The proposed approach compares networks through source-specific temporal-distance profiles using the $1$-Wasserstein distance, thereby retaining both temporal information and source correspondence.

We evaluate the framework on real public-transport timetable networks under controlled schedule delays, duration-aware randomized reference models, and naturally occurring temporal variations. We further benchmark the proposed measure against existing path-based temporal dissimilarity approaches under the same perturbation scenarios. The results show that the proposed representation responds consistently to both the extent and the magnitude of temporal disruption, while retaining distinctions that may be compressed by contact- or hop-based temporal summaries. Overall, the findings support duration-aware temporal distances as a useful basis for comparing temporal systems in which traversal and waiting times directly affect network performance.


\keywords{Temporal networks \and Network dissimilarity \and
Temporal paths \and Wasserstein distance \and Public transportation}
\end{abstract}

\section{Introduction}

Temporal networks model systems in which interactions are available at
specific points in time. In contrast to static graphs, the order of interactions determines whether information, individuals, or resources can travel between two vertices. Consequently, temporal paths and temporal reachability may differ substantially even between networks having the same underlying static graph \cite{HoSa,KeKlKu,WuChHuKeLuXu}.

In many applications, however, the time at which an interaction becomes
available is only part of the relevant temporal information. Traversing an interaction may itself require a non-negligible time. A public transportation connection, for example, is naturally described by its departure stop, arrival stop, departure time, and travel duration. The duration determines not only the total travel time, but also whether subsequent connections can be reached. Similar considerations arise in communication, routing, and logistics networks~\cite{Ba,OrRo}.

This observation exposes two difficulties in comparing temporal networks with traversal times. First, two networks may have identical interaction endpoints and departure times, yet different traversal durations; a representation based only on instantaneous contacts would regard them as identical, although their feasible temporal paths, waiting times, and fastest travel times may differ. Second, even when the temporal contact structure changes, the operational effect of the change depends on the temporal paths induced by the modified schedule: delaying a connection may simply shift one event, or it may invalidate subsequent transfers and substantially increase end-to-end journey times. This motivates comparing duration temporal networks through the temporal-distance profiles they induce, rather than only through the organization of their instantaneous contacts. Since fastest travel times lie on an ordered numerical domain, we use the $1$-Wasserstein distance to compare these profiles while retaining the magnitude of temporal displacement.


\subsection{Related work}

Temporal networks and time-respecting paths have been studied extensively from both network-science and algorithmic perspectives
\cite{HoSa,KeKlKu}.
Unlike static graphs, temporal graphs admit several non-equivalent notions of an optimal path, including earliest-arrival, latest-departure, fastest, and shortest paths~\cite{WuChHuKeLuXu}. 

Network comparison has been studied extensively in the static setting, with methods based on structural distance distributions
~\cite{ScCaDiOaNaRa} and multiscale graph representations~\cite{BaBo}. Such approaches, however, do not capture the temporal ordering or duration of individual interactions. Temporal-network comparison has therefore received separate attention. Froese et al.~\cite{FrJaNiRe} combined vertex signatures with dynamic time warping to compare sequences of graph snapshots. More recently, Zhan et al.~\cite{ZhLiWaWaHoZh} proposed a temporal dissimilarity based on first-arrival-distance distributions, temporal node dispersion, and the Jensen--Shannon divergence. Their approach preserves information lost under static aggregation and was shown to be useful for temporal-network classification and the comparison of spreading processes. Among existing temporal-network dissimilarities, the approach of Zhan et al.~\cite{ZhLiWaWaHoZh} is the closest baseline to our setting, since it also characterizes a temporal network through distributions derived from temporal paths. Its first-arrival distance, however, is defined on instantaneous contacts $(u,v,t)$ and records the number of contacts on an earliest-arrival path. In contrast, our setting assigns an explicit traversal duration $\delta$ to every interaction and summarizes paths through their elapsed travel time, including both traversal and waiting. The distinction is therefore not merely the choice of a distributional distance: the two approaches retain different information about the temporal paths induced by the network. We compare them directly in Section~\ref{sec:experimental-results}.


Optimal-transport distances provide a natural way to compare distributions supported on an ordered domain, since they account for
the displacement of probability mass rather than only for differences in probability values; the $1$-Wasserstein distance is a standard instance of this framework~\cite{GaMa}. Randomized reference models have also been widely used in temporal-network analysis to isolate the effect of specific temporal features by selectively randomizing them while preserving other network characteristics~\cite{GaGeKaKiTaVaVe}.



\subsection{Our contribution} 
\label{sec:contribution} 

We study the comparison of temporal networks in which interactions have explicit, non-zero traversal durations. Our framework represents each source vertex through a reachability-aware temporal-distance profile and compares corresponding sources while preserving their identity. We focus primarily on fastest temporal distances, so that the representation reflects traversal times, intermediate waiting, and changes in temporal reachability. 

We evaluate the framework on four public-transport timetable networks using controlled delays, duration-aware randomized reference models, and day-to-day schedule variations. Unlike a purely internal sensitivity analysis, we also benchmark the proposed $\operatorname{TD}_{\mathrm{fast}}$ measure against the FAD/JS temporal dissimilarity of Zhan et al.~\cite{ZhLiWaWaHoZh} on exactly the same controlled-delay realizations. We additionally consider a duration-aware adaptation of the FAD/JS construction, which uses traversal durations to determine earliest-arrival paths but continues to summarize them by their number of connections. The experiments show that all approaches respond to broad temporal perturbations, while $\operatorname{TD}_{\mathrm{fast}}$ provides a more consistent distinction between perturbations of different temporal magnitude. Together with the randomized-reference-model analysis, these results show the value of retaining elapsed journey-time information when comparing duration temporal networks.

\section{Preliminaries}
\label{sec:preliminaries}

Classical event-based temporal networks represent an interaction by a triple $(u,v,t)$, indicating that vertices $u$ and $v$ interact at time $t$~\cite{HoSa,ZhLiWaWaHoZh}. 
In several applications, however, traversing an interaction requires a  non-zero amount of time, as illustrated by public-transportation timetables, where a connection has both a departure time and a travel duration.

Let $\mathbb{T}=[0,T]\subseteq\mathbb{R}_{\geq 0}$ be a bounded observation interval. A \emph{duration temporal network} is a pair $\mathcal{G}=(V,\mathcal{E})$, where $V$ is a finite set of vertices and $\mathcal{E}$ is a finite multiset of directed temporal edges. Each temporal edge is a quadruple $ e=(u,v,t,\delta)$, where $u,v\in V$ are its tail and head, respectively, $t\in\mathbb{T}$ is its departure time, and $\delta\in\mathbb{R}_{>0}$ is its traversal duration. The edge can be entered at time $t$ and reaches $v$ at time $t+\delta$. We assume that $t+\delta\leq T$.

For a temporal edge $e=(u,v,t,\delta)$, we write $\operatorname{dep}(e)=t$ and $\operatorname{arr}(e)=t+\delta$ for its departure and arrival times. Parallel temporal edges are allowed, since multiple interactions may connect the same ordered pair of vertices at different, or even identical, departure times. We use $n=|V|$ and $m=|\mathcal{E}|$. 
In a public-transportation network, vertices correspond to stops and a scheduled connection from stop $u$ to stop $v$, departing at time $t$ and arriving at time $t+\delta$, is represented naturally by the edge $(u,v,t,\delta)$~\cite{PyScWaZa}.

A \emph{temporal path} from $v_0$ to $v_k$ is a sequence $P=\langle e_1,e_2,\ldots,e_k\rangle$, where $ e_i=(v_{i-1},v_i,t_i,\delta_i)$, $
    i=1,\ldots,k,$ and consecutive temporal edges satisfy $ t_i+\delta_i\leq t_{i+1}, i=1,\ldots,k-1$. Thus, the traversal of an edge must be completed before the next edge can be entered. Waiting at intermediate vertices is allowed. We denote by $\mathcal{P}_{\mathcal{G}}(u,v)$ the set of temporal paths from $u$ to $v$ in $\mathcal{G}$. For $P=\langle e_1,\ldots,e_k\rangle$, we define
$    \operatorname{dep}(P) = t_1$, $
    \operatorname{arr}(P) = t_k+\delta_k$, $
    \operatorname{dur}(P) =
        \operatorname{arr}(P)-\operatorname{dep}(P)$, $
    |P|  = k$.
The duration $\operatorname{dur}(P)$ includes both traversal and
intermediate waiting times, whereas $|P|$ counts only temporal edges.

Temporal networks admit several non-equivalent notions of an optimal path \cite{WuChHuKeLuXu}. In this work, we consider two complementary distance notions. The \emph{fastest temporal distance} from $u$ to $v$ is
$
    d_{\mathrm{fast}}^{\mathcal{G}}(u,v)
    =
    \min_{P\in\mathcal{P}_{\mathcal{G}}(u,v)}
    \operatorname{dur}(P).
$
A fastest temporal path therefore minimizes the elapsed time between departure from $u$ and arrival at $v$. 
The \emph{minimum-hop temporal distance} from $u$ to $v$ is
$
    d_{\mathrm{hop}}^{\mathcal{G}}(u,v)
    =
    \min_{P\in\mathcal{P}_{\mathcal{G}}(u,v)}
    |P|.
$
It measures the minimum number of temporal interactions required to reach $v$ from $u$, subject to temporal feasibility. Waiting at an intermediate vertex does not increase the hop count. 
For either distance notion $d$, we adopt the conventions $d^{\mathcal{G}}(u,u)=0$ and $d^{\mathcal{G}}(u,v)=+\infty$ $\text{if}$ $\mathcal{P}_{\mathcal{G}}(u,v)=\emptyset$. Temporal distances are generally asymmetric, since temporal edges are directed and can be traversed only at their specified departure times.

Let $\mu=(\mu_1,\ldots,\mu_r)$ and $\nu=(\nu_1,\ldots,\nu_s)$ be two discrete probability distributions supported on $x_1,\ldots,x_r\in\mathbb{R}$ and
$y_1,\ldots,y_s\in\mathbb{R}$, respectively. A coupling between $\mu$ and $\nu$ is a non-negative matrix $\pi\in\mathbb{R}_{\geq 0}^{r\times s}$ satisfying $\sum_{j=1}^{s}\pi_{ij}=\mu_i$ for every $i$, $\sum_{i=1}^{r}\pi_{ij}=\nu_j$ for every $j$. We denote the set of all couplings between $\mu$ and $\nu$ by $\Pi(\mu,\nu)$. The $1$-Wasserstein distance between $\mu$ and $\nu$ is 
$
W_1(\mu,\nu)
=
\min_{\pi\in\Pi(\mu,\nu)}
\sum_{i=1}^{r}\sum_{j=1}^{s}
\pi_{ij}|x_i-y_j|$.

It can be interpreted as the minimum cost of transporting the probability mass of $\mu$ onto that of $\nu$, when moving one unit of mass from $x_i$ to $y_j$ incurs cost $|x_i-y_j|$ \cite{GaMa}. When the two distributions share a common ordered support
$x_1< x_2<\cdots<x_r$, let
$M_k=\sum_{i=1}^{k}\mu_i$, $
N_k=\sum_{i=1}^{k}\nu_i$
denote their cumulative masses. Then 
$W_1(\mu,\nu)
=
\sum_{k=1}^{r-1}
|M_k-N_k|(x_{k+1}-x_k)$.
Thus, $W_1$ accounts for the distance over which probability mass is displaced, making it suitable for distributions supported on ordered temporal distances. Moreover, it has the same physical unit as its support and is therefore measured in units of time for travel-time distributions.

\section{Duration-Aware Temporal Network Dissimilarity}
\label{sec:duration-aware-dissimilarity}

We now introduce a dissimilarity measure for duration temporal networks. Let $ \mathcal{G}_1=(V_1,\mathcal{E}_1)$ and $\mathcal{G}_2=(V_2,\mathcal{E}_2)$ be two duration temporal networks, and let $d$ be a temporal-distance notion. 
Our construction proceeds source-wise. For every source vertex, we summarize the temporal distances to all possible destinations by a discrete probability distribution. Finite temporal distances are assigned to distance bins, whereas temporally unreachable destinations are represented by an additional unreachable state. We then compare the profiles of corresponding sources using the 1-Wasserstein distance and aggregate the resulting source-level distances into a network-level dissimilarity.



\subsection{Temporal-Distance Profiles}

Fix a temporal distance $d$ and common bin boundaries $b_0<b_1<\cdots<b_r$, defining $B_j=[b_{j-1},b_j)$ for $j<r$ and $B_r=[b_{r-1},b_r]$. Let $U$ denote the common destination universe of the networks being compared. For each source $s$, we define a reachability-aware temporal-distance profile over the $r$ finite-distance bins and one additional state representing temporal unreachability.

Each destination $v\in U\setminus\{s\}$ contributes to the bin containing $d^G(s,v)$ when $d^G(s,v)<+\infty$, and to the unreachable state otherwise. The profile is normalized by $|U|-1$, rather than by the number of reachable destinations. Thus, the mass assigned to the unreachable state is exactly the fraction of destinations that cannot be temporally reached from $s$.

Each finite bin is represented by its midpoint $c_j=\frac{b_{j-1}+b_j}{2}$. The unreachable state is represented by an additional support point $c_\infty = c_r + \Delta_\infty$, where $\Delta_\infty>0$ is fixed throughout the experiment. In our
experiments, we set $\Delta_\infty=b_r-b_{r-1}$, i.e., the unreachable state is placed one final-bin width beyond the largest finite-distance representative.

This choice is fixed across all networks and perturbations: extending the support by one final-bin width makes transport from any finite bin to the unreachable state exactly one final-bin width more costly than transport to \(c_r\). Thus, becoming unreachable is penalized more strongly than any deterioration within the finite-distance range, without allowing reachability changes to dominate the comparison. Hence, all source profiles are defined on the common ordered support \(C=\{c_1,\ldots,c_r,c_\infty\}\).

All profiles compared within the same experiment use common bin boundaries. We use ten linearly spaced bins for the controlled-delay experiments and geometrically spaced bins for the randomized reference models. The binning choice is fixed across the networks compared in each experiment.

\subsection{Network-Level Dissimilarity}
\label{sec:network-level-dissimilarity}

We compare the reachability-aware profiles of corresponding source vertices in the two networks. Let $S$ be the common source set and let $H_s^{(1)}$ and $H_s^{(2)}$ denote the profiles of source $s$ in $G_1$ and $G_2$, respectively. Since unreachable destinations are represented explicitly, every source has a well-defined probability profile on the same support $\mathcal{C}$. For each $s \in S$, we define $w_s =
W_1\!\left(
H_s^{(1)}, H_s^{(2)}
\right)$.

Thus, probability mass must be transported not only when finite temporal distances change, but also when destinations become reachable or unreachable. In particular, a loss of temporal reachability transfers mass from the finite-distance part of the profile to the unreachable state and therefore contributes directly to the source-level dissimilarity.

\begin{definition}[Duration-aware temporal-network dissimilarity]
For a fixed temporal-distance $d$, we define $\operatorname{TD}_d(G_1,G_2)
=
\frac{1}{|S|L_{\mathcal C}}
\sum_{s\in S}
W_1\!\left(H_s^{(1)},H_s^{(2)}\right)$, where \(L_{\mathcal C}=c_\infty-c_1\) is the width of the common
reachability-aware support.
\end{definition}



\begin{proposition}
\label{prop:td-properties}
For a fixed source set $S$, fixed destination universe $U$, common support $\mathcal C$, and temporal-distance notion $d$, $0 \leq
\operatorname{TD}_d(\mathcal G_1,\mathcal G_2) \leq 1$. Moreover, $\operatorname{TD}_d$ is symmetric and satisfies
$\operatorname{TD}_d(\mathcal G,\mathcal G)=0$. On aligned collections of source profiles defined over the same
support $\mathcal C$, it also satisfies the triangle inequality.
\end{proposition}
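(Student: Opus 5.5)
The plan is to reduce every claim to the corresponding property of $W_1$ on the fixed finite ordered support $\mathcal C=\{c_1,\ldots,c_r,c_\infty\}$, and then pass to the average over $S$ with the constant factor $1/(|S|L_{\mathcal C})$. The first step is to check that, for fixed $S$, $U$, bins and $d$, each $H_s^{(i)}$ is a genuine probability vector on $\mathcal C$. Every destination $v\in U\setminus\{s\}$ contributes mass $1/(|U|-1)$ to exactly one state, namely the bin containing $d(s,v)$ or the unreachable state. Here we use that the bins partition $[b_0,b_r]$, and we assume, as implicit in the construction, that all finite distances lie in $[b_0,b_r]$. Hence each $w_s=W_1(H_s^{(1)},H_s^{(2)})$ is well defined and finite.

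\textbf{Bounds, symmetry, identity.} Nonnegativity is immediate, since every coupling has nonnegative entries and costs $|x_i-y_j|\ge 0$. For the upper bound, note that every pair of support points satisfies $|c_i-c_j|\le c_\infty-c_1=L_{\mathcal C}$, because $c_1<\cdots<c_r<c_\infty$. Take any coupling, for instance the product coupling $\mu\otimes\nu$, whose total mass is $1$. Its cost is at most $L_{\mathcal C}$, so $w_s\le L_{\mathcal C}$. Averaging over $S$ and dividing by $L_{\mathcal C}$ then gives $\operatorname{TD}_d\le 1$. For symmetry, transposing a coupling $\pi\in\Pi(\mu,\nu)$ gives a coupling in $\Pi(\nu,\mu)$ with the same cost, since $|x_i-y_j|$ is symmetric; so $W_1$ is symmetric, and hence so is each $w_s$ and the average. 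For the identity, $\mathcal G_1=\mathcal G_2$ gives identical distances and hence identical profiles $H$. The diagonal coupling $\pi=\mathrm{diag}(H)$ has cost $0$, so $w_s=0$ for every $s$.

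\textbf{Triangle inequality.} This is the only step needing care. I would use the cumulative formula stated in the Preliminaries, which applies because all profiles share the ordered support $\mathcal C$:
\[
W_1(\mu,\nu)=\sum_{k=1}^{r}|M_k-N_k|\,(x_{k+1}-x_k),
\]
where $x_{r+1}=c_\infty$. This formula exhibits $W_1$ as a weighted $\ell_1$ distance between cumulative vectors, with nonnegative weights $x_{k+1}-x_k$. Take three networks with profiles $\mu,\nu,\rho$ for the same source $s$. The scalar triangle inequality $|M_k-R_k|\le|M_k-N_k|+|N_k-R_k|$ holds termwise, and summing with these weights gives $W_1(\mu,\rho)\le W_1(\mu,\nu)+W_1(\nu,\rho)$. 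Because the profiles are aligned, i.e.\ indexed by the same $S$ and defined on the same $\mathcal C$, we can sum this inequality over $s\in S$ and scale by the common positive constant $1/(|S|L_{\mathcal C})$, which yields the triangle inequality for $\operatorname{TD}_d$.

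The main obstacle is exactly this alignment hypothesis. The triangle inequality only makes sense when all three networks use the same $S$, $U$, bin boundaries and $\Delta_\infty$, so that the normalizing constant $L_{\mathcal C}$ is common. I would state this explicitly as the setting of the claim. I would also remark that $\operatorname{TD}_d$ is only a pseudometric on networks: distinct networks with the same binned profiles have dissimilarity $0$. Relying on the cumulative formula from the Preliminaries avoids the gluing lemma; the alternative would be to compose optimal couplings via $\pi_{ik}=\sum_j \pi^{(1)}_{ij}\pi^{(2)}_{jk}/\nu_j$, which also works but is less direct.
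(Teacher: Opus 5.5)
Your proof is correct and follows the same route as the paper. Each claim is reduced source-wise to a property of $W_1$ on the common support $\mathcal C$: bounded by $L_{\mathcal C}=c_\infty-c_1$, symmetric, zero on identical profiles, and satisfying the triangle inequality. The normalized average over $S$ then inherits these properties. The only difference is that you verify the $W_1$ properties explicitly (product coupling, transposition, diagonal coupling, and the cumulative-mass formula read as a weighted $\ell_1$ distance), whereas the paper simply invokes them as known facts.
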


\begin{proof}
For every source $s\in S$, both $H_s^{(1)}$ and $H_s^{(2)}$ are probability distributions on the common support $\mathcal C$.
Therefore, $0 \leq
W_1\!\left(H_s^{(1)},H_s^{(2)}\right)
\leq L_{\mathcal C}$, where $L_{\mathcal C}=c_\infty-c_1$ is the width of the common support. Averaging the source-level Wasserstein distances and normalizing by $L_{\mathcal C}$ gives the stated range. Symmetry and the triangle inequality follow source-wise from the corresponding properties of $W_1$ and are preserved under averaging. Finally, $\operatorname{TD}_d(\mathcal G,\mathcal G)=0$ follows because corresponding source profiles are identical.
\end{proof}

Different temporal networks may produce identical source profiles and therefore have zero dissimilarity. Hence, $\operatorname{TD}_d$ is in general a pseudometric on duration temporal networks represented using a fixed source set $S$, a fixed destination universe $U$, and a common support $\mathcal C$. The construction assumes aligned source identifiers and a common profile representation across the networks being compared.

\paragraph{Computational cost.}
Let $N_{\mathrm{TE}}$ and $M_{\mathrm{TE}}$ denote the numbers of vertices and edges, respectively, in the time-expanded graph. Since both travel and waiting edges advance forward in time, the time-expanded graph is acyclic. Fastest temporal distances from a source can therefore be computed by a single forward relaxation in temporal (topological) order, requiring $O(N_{\mathrm{TE}}+M_{\mathrm{TE}})$ time per source. In the time-expanded construction used here, each temporal edge induces at most two event vertices and one travel edge, while waiting edges connect consecutive events at each stop; hence $N_{\mathrm{TE}}=O(m)$ and $M_{\mathrm{TE}}=O(m)$. Thus, fastest temporal distances require $O(m)$ time per source and $O(|S|m)$ time over all sources. Given these distances, profile construction takes $O(|U|+r)$ time per source for the binning schemes used here, while the one-dimensional Wasserstein distance requires $O(r)$ time from cumulative masses. Therefore, after constructing the time-expanded graph, the overall source-wise computation and comparison requires $O(|S|(m+|U|+r))$ time, and the profiles can be processed one source at a time.


\section{Empirical Analysis}
\label{sec:empirical-analysis}

We evaluate the proposed dissimilarity on public-transport timetable networks using controlled delays, randomized reference models, and day-to-day variations. This section describes the experimental methodology; the corresponding results are presented in
Section~\ref{sec:experimental-results}.

\subsection{Experimental Methodology}
\label{sec:experimental-methodology}

We use public-transport timetable networks from the curated collection introduced by Kujala et al.~\cite{KuWeDaMlSa}, corresponding to Belfast, Grenoble, Luxembourg, and Turku. The collection is derived from GTFS timetable data and provides temporally filtered network event lists suitable for temporal-network analysis. Vertices represent public-transport stops, and each timetable record corresponds to a directed temporal edge, with traversal duration given by the difference between arrival and departure times.
Controlled perturbations and randomized reference models use the one-day temporal-network extracts, whereas the day-to-day comparison uses the corresponding week-long data partitioned by the weekday of each connection’s departure. Summary statistics of the four transportation networks are reported in Table~\ref{tab:datasets}.

For every network and source stop, we compute temporal distances to all destinations in the common stop universe and construct the corresponding reachability-aware profile described in Section~\ref{sec:duration-aware-dissimilarity}, with temporally unreachable destinations assigned to the additional unreachable state.

For fastest temporal distances, the weight of a travel edge in the time-expanded graph is its traversal duration, whereas the weight of a waiting edge is the corresponding waiting time. The profiles associated with the two networks being compared are aligned using their source-stop identifiers. If both networks arise from a perturbation of the same timetable, their vertex sets are normally identical. When comparing different days, we use the union of their stop sets as the common vertex universe. A stop that is absent from one of the two networks is treated as an isolated vertex in that network, so that destinations that cannot be reached from it, or destinations absent from that network, contribute to the unreachable state rather than to an empty profile.


We represent fastest temporal distances using $r=10$ bins. Linear bins
are used for controlled delays, with the same boundaries across all delay
ranges, perturbation levels, and realizations; geometric bins are used for
the randomized-reference-model and day-to-day experiments.



The lower support boundary is
$b_{\min}
=
\min_{\substack{s,v\\
0<d^G(s,v)<+\infty}}
d^G(s,v)$, where distances are measured in seconds. To determine an upper boundary covering the strongest perturbations, we construct a fixed reference realization in which all temporal edges are delayed independently by a value between $20$ and $30$ minutes. If $\mathcal G^{\mathrm{ref}}$ denotes this reference network, we set $b_{\max}
=
(1+\varepsilon)
\max_{s,v}
d^{\mathcal G^{\mathrm{ref}}}(s,v)$,
where $\varepsilon>0$ is a small numerical constant ensuring that the maximum distance lies inside the last bin. The reference realization is generated using a fixed random seed and is used only to determine the common support.
Our first family of experiments evaluates whether the dissimilarity responds consistently to the extent and magnitude of a controlled temporal perturbation.

Let $\mathcal{G}=(V,\mathcal{E})$ be a scheduled transportation network, with $m=|\mathcal{E}|$. For a perturbation level $p\in[0,1]$, we select uniformly at random, without replacement, $\lfloor pm\rfloor$ temporal edges. For each selected edge $e=(u,v,t,\delta)$, we independently sample an integer delay $\epsilon_e\sim
\operatorname{Unif}\{60a,\ldots,60b\}$, where $a$ and $b$ are given in minutes, and replace it by $e'=(u,v,t+\epsilon_e,\delta)$. Thus, the connection is shifted in time while its traversal duration remains unchanged.

This perturbation models an event-level delay rather than an increase in traversal duration. Although the duration of the selected connection remains unchanged, its later availability may increase waiting times, invalidate a transfer, or cause a different temporal path to become optimal. Delays are applied independently to the selected timetable connections and are not propagated to subsequent connections belonging to the same vehicle. 

We consider the perturbation levels $p\in\{0.2,0.4,0.6,0.8,1.0\}$, corresponding to delays applied to $20\%$, $40\%$, $60\%$, $80\%$, and $100\%$ of the temporal edges. Figures also include \(p=0\) as the unperturbed reference. To separate the effect of perturbation extent from that of perturbation magnitude, we use the three delay intervals $[0,5]$, $[5,10]$, $[20,30]$ minutes. For each delay range, perturbation level, and dataset, we generate 50 independent realizations per configuration. Reported dissimilarity values are the mean over these 50 realizations.

\begin{table}[t]
\centering
\begin{tabular}{lrr}
\toprule
City & $n=|V|$ & $m=|E|$ \\
\midrule
Belfast    & 1,917 & 122,693 \\
Grenoble   & 1,547 & 114,492 \\
Luxembourg & 1,367 & 186,752 \\
Turku      & 1,850 & 133,512 \\
\bottomrule
\end{tabular}
\caption{Summary statistics of the four public-transport networks.}
\label{tab:datasets}
\end{table}

Controlled delays preserve the endpoints and durations of all temporal edges and modify only the availability times of selected connections. To isolate the effect of other temporal characteristics, we also consider four randomized reference models. For a perturbation level $p$, every model first selects a uniformly random subset $\mathcal{E}_p
    =
    \{e_1,\ldots,e_k\}$, $k=\lfloor p|\mathcal{E}|\rfloor$, without replacement. Write $e_i=(u_i,v_i,t_i,\delta_i)$.
All models preserve the endpoints $(u_i,v_i)$ and leave edges outside
$\mathcal{E}_p$ unchanged.

\begin{description}

\item[\textsc{Duration Shuffling} (\textsc{DS}).] 
Let $\pi_{\delta}$ be a uniformly random permutation of $\{1,\ldots,k\}$. Each selected edge is replaced by $
    e_i'
    =
    (u_i,v_i,t_i,\delta_{\pi_{\delta}(i)})$. The departure time of every selected event remains fixed, while the
multiset of selected durations is preserved and reassigned among the
events.

\item[\textsc{Time Shuffling} (\textsc{TS}).]
Let $\pi_t$ be a uniformly random permutation of $\{1,$  $\ldots,k\}$. We set
$    e_i'
    =
    (u_i,v_i,t_{\pi_t(i)},\delta_i)
$.
Thus, the selected departure times are permuted while every edge retains its original duration. The arrival time is recomputed as
$t_{\pi_t(i)}+\delta_i$.

\item[\textsc{Uniform Random Times} (\textsc{URT}).]
For every selected edge, a new departure time is independently drawn from the observation interval: $
t_i' \sim $      

$
\operatorname{Unif}\bigl(\{t_{\min},\ldots,t_{\max}\}\bigr).
$, where $t_{\min}
    =
    \min_{e\in\mathcal{E}}\operatorname{dep}(e)
    $ and $
    t_{\max}
    =
    \max_{e\in\mathcal{E}}\operatorname{dep}(e)$.
The transformed edge is $e_i'
    =
    (u_i,v_i,t_i',\delta_i)$.
This model preserves durations but replaces the empirical timing pattern by uniformly distributed departure times.

\item[\textsc{Time--Duration Decoupling} (\textsc{TDD}).]
Let $\pi_t$ and $\pi_{\delta}$ be independent uniformly random permutations of $\{1,\ldots,k\}$. We define $
    e_i'
    =
    \bigl(
        u_i,v_i,
        t_{\pi_t(i)},
        $ $\delta_{\pi_{\delta}(i)}
    \bigr)$.
This model preserves separately the multisets of selected departure times and durations, but destroys their association with each other and with the original events.

\end{description}


For each reference model, we use the same perturbation levels $p$ as above and generate $50$ independent realizations per configuration.

\noindent \textbf{FAD/JS baselines.} To compare the proposed measure with an existing temporal-network dissimilarity, we evaluate the FAD/JS framework of Zhan et al.~\cite{ZhLiWaWaHoZh} on the same controlled-delay realizations. For this baseline, every timetable connection $(u,v,t,\delta)$ is projected to the directed instantaneous contact $(u,v,t)$; hence, traversal duration is not part of the baseline representation. For every source, the first-arrival distance (FAD) is the number of temporal contacts on an earliest-arrival path. The resulting FAD distributions and temporal node dispersion are compared using the Jensen--Shannon-based temporal dissimilarity defined by Zhan et al. We retain the direction of transportation contacts and explicitly align temporally unreachable destinations when comparing the two networks. We denote this baseline by $\operatorname{TD}_{\mathrm{FAD/JS}}$. For both FAD/JS baselines, we set \(\omega_1=\omega_2=0.5\), following Zhan et al.

The Jensen–Shannon divergence between two probability distributions is bounded by \(\log 2\). Accordingly, in the formulation of Zhan et al. [13], the between-network term \(\sqrt{J(\mu_{G_1},\mu_{G_2})/\log 2}\) lies in \([0,1]\). Their temporal node dispersion, \(\mathrm{TNND}(G)=J(H_1,\ldots,H_N)/\log(l_{\max}+1)\), is likewise normalized to \([0,1]\), and hence \(|\sqrt{\mathrm{TNND}(G_1)}-\sqrt{\mathrm{TNND}(G_2)}|\in[0,1]\). Since \(TD_{\mathrm{FAD/JS}}\) combines these terms with non-negative weights satisfying \(\omega_1+\omega_2=1\), it is itself bounded in \([0,1]\). Thus, like the normalized \(TD_{\mathrm{fast}}\), the baseline has a bounded unit range, so differences in relative sensitivity and saturation cannot be attributed merely to different numerical scales.

We additionally evaluate a \emph{duration-aware FAD/JS adaptation}. In this variant, a connection $(u,v,t,\delta)$ is feasible only after the source path has reached $u$ by time $t$, and traversing it reaches $v$ at time $t+\delta$. Traversal durations therefore affect path feasibility and earliest arrival. The resulting FAD value nevertheless remains the number of timetable connections on an earliest-arrival path. This second comparison controls for the omission of traversal durations; the two approaches nevertheless still differ in their path objective and in how the selected path is summarized. For both baselines, the delayed edges, sampled delay values, perturbation levels, delay ranges, and random seeds are identical to those used for $\operatorname{TD}_{\mathrm{fast}}$. We denote this adaptation by $\operatorname{TD}_{\mathrm{DA\!-\!FAD/JS}}$.

To quantify delay-regime separability, for each dataset and \(p\) we perform a one-way ANOVA on the 50 realization-level dissimilarities in the \(0\!-\!5\), \(5\!-\!10\), and \(20\!-\!30\) minute groups. We use the resulting \(F\)-statistic and \(\eta^2=SS_{\mathrm{between}}/SS_{\mathrm{total}}\) as descriptive separability measures, with larger values indicating stronger between-regime separation relative to within-regime variation.

In addition to comparisons between an observed network and a randomized or delayed version, we compare temporal segments of the same transportation system.
For each week-long dataset, we extract Friday, Saturday, and Sunday networks and compute all three pairwise dissimilarities using a common vertex universe and profile support. Stops absent from one day are treated as isolated vertices, so their unreachable destinations contribute to the unreachable state.




All repeated experiments used explicit random seeds for reproducibility, and the 1-Wasserstein distance was computed using SciPy’s standard implementation.


\subsection{Experimental Results}
\label{sec:experimental-results}
\begin{figure*}[t]
    \centering
    \includegraphics[width=\textwidth]
    {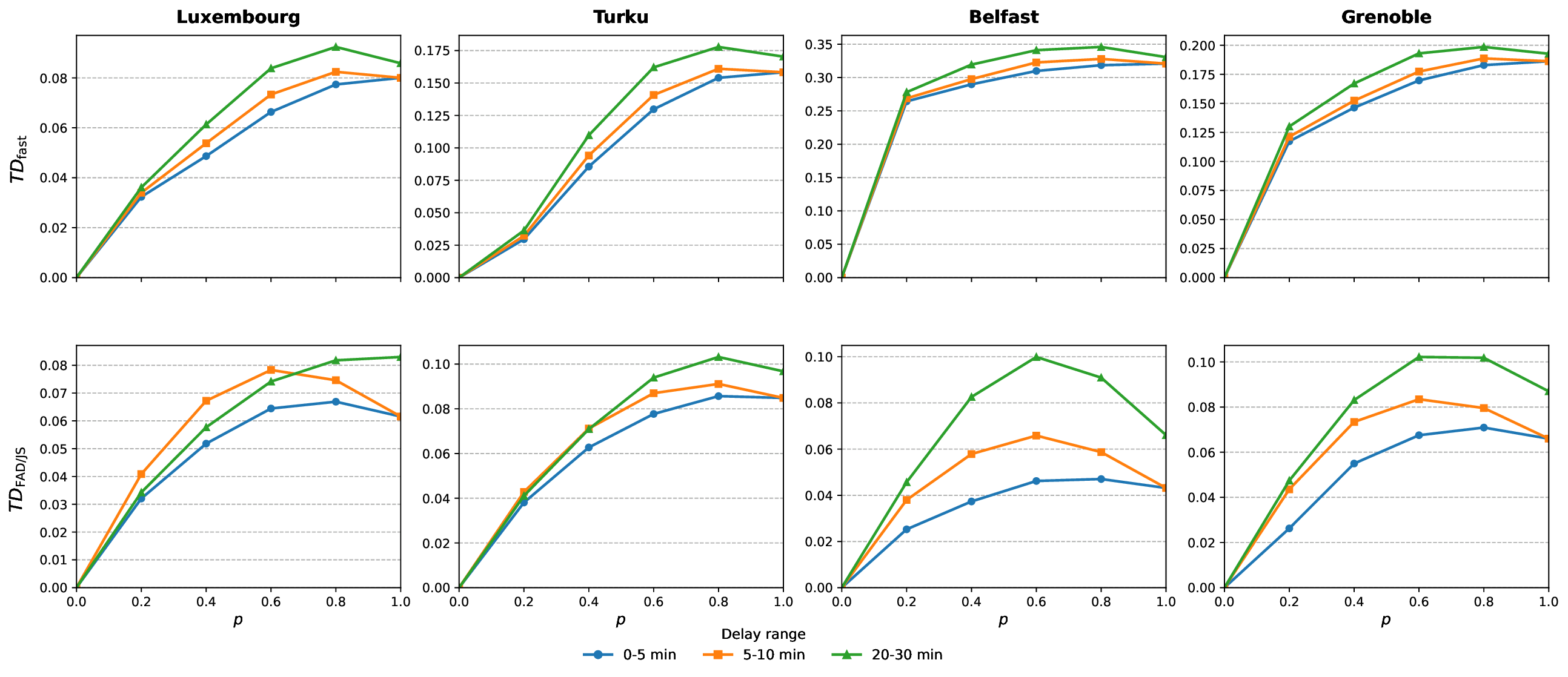}
    \caption{Controlled-delay comparison of $\operatorname{TD}_{\mathrm{fast}}$ and $\operatorname{TD}_{\mathrm{FAD/JS}}$ across the transportation networks, using the same perturbation realizations and delay ranges.}
    \label{fig:controlled-delays}
\end{figure*}

We report results for $\operatorname{TD}_{\mathrm{fast}}$ under controlled delays, duration-aware randomized reference models, and day-to-day variations.

\begin{figure*}[t]
    \centering
    \includegraphics[width=0.85\textwidth]
    {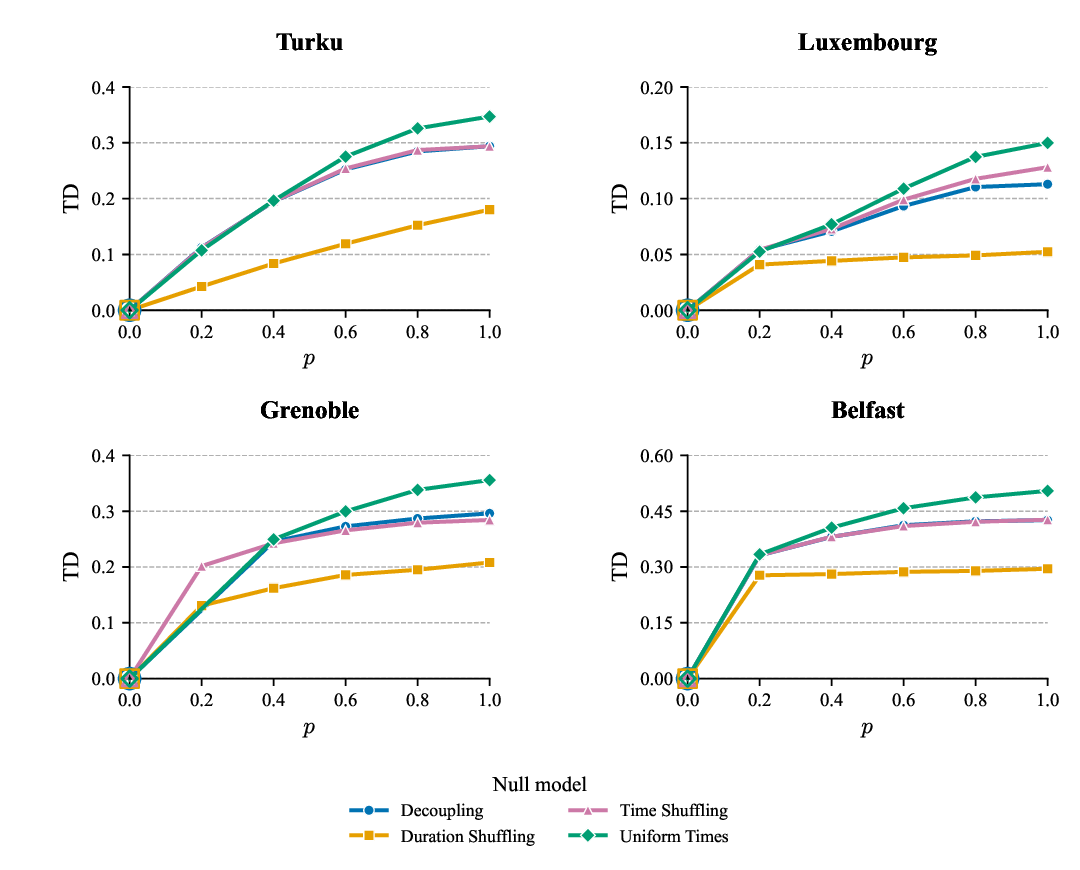}
    \caption{Sensitivity of $\operatorname{TD}_{\mathrm{fast}}$ to four randomized reference models across the transportation networks.}
    \label{fig:null-models}
\end{figure*}

\noindent \textbf{Controlled delays and comparison with FAD/JS.} Figure~\ref{fig:controlled-delays} compares $\operatorname{TD}_{\mathrm{fast}}$ with the $\operatorname{TD}_{\mathrm{FAD/JS}}$ as both the fraction of delayed edges and the delay magnitude increase. For every city and configuration, both measures are evaluated on the same perturbed timetable realizations.


Both measures detect the controlled perturbations: their dissimilarity generally increases as a larger fraction of the timetable is delayed, with a tendency to saturate for large $p$. This is important because it shows that the FAD/JS baseline is itself sensitive to substantial changes in the temporal organization of the network. The distinction appears primarily in the response to delay magnitude. $\operatorname{TD}_{\mathrm{fast}}$ maintains a comparatively clear separation between the $0$--$5$, $5$--$10$, and $20$--$30$ minute regimes across the datasets, with the largest delays generally producing the largest dissimilarity. In the FAD/JS baseline, the separation is less uniform: the ordering of the delay ranges is not preserved throughout all configurations, most visibly in Luxembourg, and several curves become close as the perturbation becomes widespread. 

This pattern is also supported quantitatively: \(TD_{\mathrm{fast}}\) yields larger \(F\)-statistics and \(\eta^2\) values than \(TD_{\mathrm{FAD/JS}}\) in 14 of the 20 dataset–\(p\) configurations, including all ten configurations in Luxembourg and Turku. Across all configurations, the median \(F\)-statistic is \(798.25\) for \(TD_{\mathrm{fast}}\) versus \(326.74\) for \(TD_{\mathrm{FAD/JS}}\), with corresponding median \(\eta^2\) values of \(0.915\) and \(0.816\).

This difference reflects the information retained by the two representations. FAD summarizes an earliest-arrival path by its number of contacts, so a larger delay may substantially change elapsed journey time without a proportional change in FAD. $\operatorname{TD}_{\mathrm{fast}}$, instead, directly represents elapsed fastest-path duration, including traversal and waiting time.



\noindent \textbf{Duration-Aware FAD/JS Comparison.}
The original FAD/JS baseline represents each temporal interaction as
an instantaneous contact and therefore does not receive traversal
duration as input. To control for this representational difference, we
also considered a duration-aware adaptation in which an edge
$(u,v,t,\delta)$ can be traversed only when $u$ has been reached by
time $t$, and the corresponding arrival time is $t+\delta$.
Earliest-arrival paths are therefore computed with the same traversal
durations as in the transportation network. The FAD assigned to a
destination, however, remains the number of timetable connections on
an earliest-arrival path.

\begin{figure*}[h]
    \centering
    \includegraphics[width=\textwidth]
    {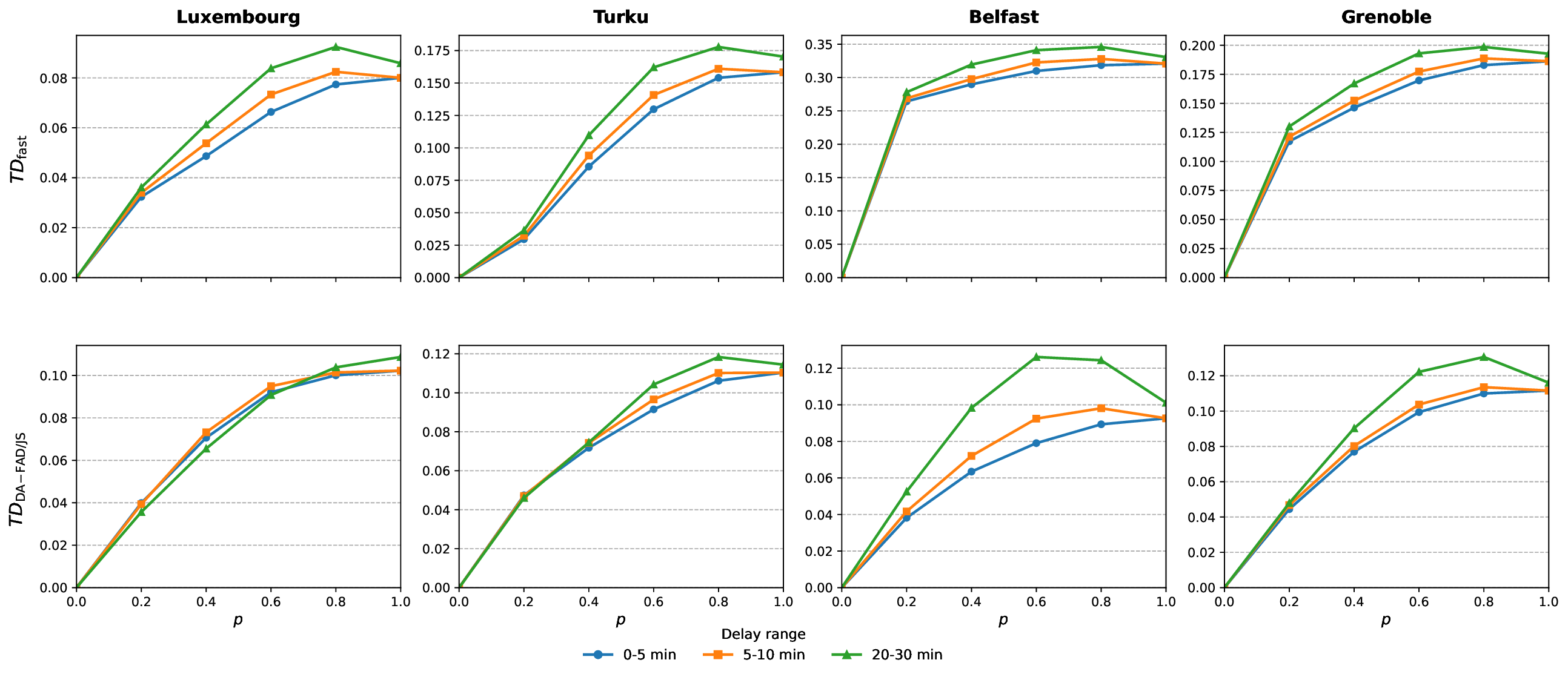}
    \caption{Controlled-delay comparison between
    $\operatorname{TD}_{\mathrm{fast}}$ and the duration-aware
    adaptation of the FAD/JS framework. Both methods use the same
    duration temporal networks and exactly the same perturbation
    realizations.}
    \label{fig:duration-aware-fad}
\end{figure*}

Figure~\ref{fig:duration-aware-fad} shows that incorporating traversal
durations into FAD improves the semantic alignment of the two
approaches, since both now use duration information when determining
temporal feasibility. Nevertheless, the FAD/JS response remains
comparatively compressed across delay magnitudes, especially in
Luxembourg and Turku. This is consistent with the fact that FAD still
summarizes an earliest-arrival path by its number of connections,
whereas $\operatorname{TD}_{\mathrm{fast}}$ directly represents
elapsed journey duration.

Quantitatively, $\operatorname{TD}_{\mathrm{fast}}$ yields larger
$F$-statistics and $\eta^2$ values than
$\operatorname{TD}_{\mathrm{DA-FAD/JS}}$ in 19 of the 20
dataset--$p$ configurations, with median $F$-statistics of $798.25$
versus $49.21$ and median $\eta^2$ values of $0.915$ versus $0.401$.

\noindent \textbf{Sensitivity to randomized reference models.} We next investigate which temporal ingredients contribute most strongly to the distance profiles by selectively destroying different forms of temporal organization. Figure~\ref{fig:null-models} compares the four duration-aware randomized reference models introduced in Section~\ref{sec:experimental-methodology}. The figure reports the results obtained with geometrically spaced bins.

A consistent ordering emerges across the four datasets. Duration Shuffling produces the smallest dissimilarity, whereas Uniform Random Times yields the largest values for moderate and large $p$; Time Shuffling and Time–Duration Decoupling generally lie between these extremes. Thus, perturbing the temporal placement of interactions affects fastest-distance profiles more strongly than reassigning traversal durations while preserving the empirical departure schedule.


Duration Shuffling also highlights why explicit traversal durations matter for network comparison. This transformation leaves every projected instantaneous contact $(u,v,t)$ unchanged and modifies only the traversal-duration assignment. Consequently, the original contact-based FAD/JS representation is identical before and after Duration Shuffling, even though the induced fastest journey times may change. The non-zero response of $\operatorname{TD}_{\mathrm{fast}}$ therefore captures temporal information that is absent from the instantaneous-contact projection by construction. The stable ordering across datasets indicates that the profiles distinguish
which temporal feature is perturbed, rather than merely the number of modified edges.



\begin{table}[t]
\centering
\small
\setlength{\tabcolsep}{3.5pt}
\renewcommand{\arraystretch}{0.88}

\begin{tabular}{l l c c}
\toprule
\textbf{City} & \textbf{Day} & \textbf{Saturday} & \textbf{Sunday} \\
\midrule

\multirow{2}{*}{Belfast} 
 & Friday   & 0.202 & 0.493 \\
 & Saturday & ---   & 0.372 \\
\cmidrule(lr){1-4}

\multirow{2}{*}{Grenoble} 
 & Friday   & 0.072 & 0.323 \\
 & Saturday & ---   & 0.291 \\
\cmidrule(lr){1-4}

\multirow{2}{*}{Luxembourg} 
 & Friday   & 0.045 & 0.238 \\
 & Saturday & ---   & 0.206 \\
\cmidrule(lr){1-4}

\multirow{2}{*}{Turku} 
 & Friday   & 0.120 & 0.179 \\
 & Saturday & ---   & 0.088 \\

\bottomrule
\end{tabular}

\caption{Pairwise fastest-path dissimilarities between Friday, Saturday, and Sunday timetable networks.}
\label{weekday-comparison}
\end{table}

\noindent\textbf{Effect of temporal-path semantics.}
To examine the effect of the temporal-distance notion, we also evaluate the randomized reference-model experiments using minimum-hop temporal distance. The randomized networks are unchanged; only the temporal-distance notion used to construct the source profiles differs.

For fastest temporal distance, travel edges are weighted by their
traversal durations and waiting edges by the corresponding waiting
times. For minimum-hop distance, travel edges have weight one and
waiting edges have weight zero. Thus, the former measures elapsed
journey time, whereas the latter counts the number of scheduled
connections used.

This comparison shows whether changes in temporal performance are also reflected in route complexity. A delay may, for example, change the fastest travel time while leaving the minimum number of required
connections unchanged.

\begin{figure*}[h]
    \centering
    \includegraphics[width=0.7\textwidth]
    {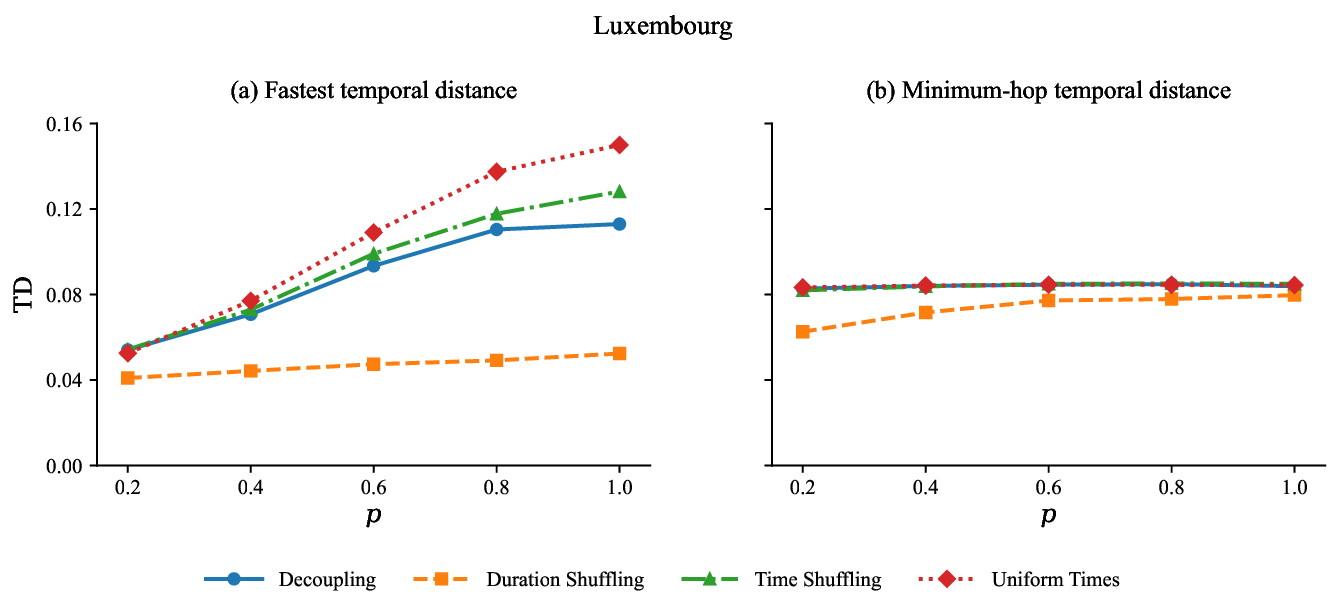}
    \caption{Comparison of the response of the four randomized reference models under fastest and minimum-hop temporal distances.}
    \label{fig:comparison}
\end{figure*}

\noindent \textbf{Natural day-to-day variability.} Table~\ref{weekday-comparison} shows that Friday--Sunday is the most dissimilar pair in all four datasets, while the closest pair varies by city. Thus, the measure also captures naturally occurring timetable variation.


\section{Conclusion} 
We introduced a duration-aware framework for comparing temporal networks with non-zero traversal times using source-specific temporal-distance profiles. Experiments on four public-transport networks show that $\operatorname{TD}_{\mathrm{fast}}$ responds consistently to both the extent and magnitude of temporal perturbations. Comparisons with FAD/JS baselines indicate that retaining elapsed journey-time information provides a more consistent distinction between delay magnitudes, beyond the mere availability of traversal durations. Overall, the results support duration-aware temporal distances as a useful representation for comparing temporal systems in which traversal and waiting times affect network performance.


\begin{thebibliography}{10}

\bibitem{BaBo}
Bagrow, J. P., \& Bollt, E. M. (2019). An information-theoretic, all-scales approach to comparing networks. Applied Network Science, 4(1), 1-15.

\bibitem{Ba}
Bast, H., Delling, D., Goldberg, A., Müller-Hannemann, M., Pajor, T., Sanders, P., Wagner, D., \& Werneck, R. F. (2016). Route planning in transportation networks. In Algorithm engineering: Selected results and surveys (pp. 19-80). Cham: Springer International Publishing.


\bibitem{GaGeKaKiTaVaVe}
Gauvin, L., Génois, M., Karsai, M., Kivela, M., Takaguchi, T., Valdano, E., \& Vestergaard, C. L. (2022). Randomized reference models for temporal networks. Siam Review, 64(4), 763-830.

\bibitem{FrJaNiRe}
Froese, V., Jain, B., Niedermeier, R., \& Renken, M. (2020). Comparing temporal graphs using dynamic time warping. Social Network Analysis and Mining, 10(1), 50.

\bibitem{HoSa}
Holme, P., \& Saramäki, J. (Eds.). (2019). Temporal network theory (Vol. 2). New York: Springer.

\bibitem{OrRo}
Orda, A., \& Rom, R. (1990). Shortest-path and minimum-delay algorithms in networks with time-dependent edge-length. Journal of the ACM (JACM), 37(3), 607-625.

\bibitem{KeKlKu}
Kempe, D., Kleinberg, J., \& Kumar, A. (2000). Connectivity and inference problems for temporal networks. In Proceedings of the thirty-second annual ACM symposium on Theory of computing (pp. 504-513).

\bibitem{KuWeDaMlSa}
Kujala, R., Weckström, C., Darst, R. K., Mladenović, M. N., \& Saramäki, J. (2018). A collection of public transport network data sets for 25 cities. Scientific data, 5(1), 180089.

\bibitem{GaMa}
Peyré, G., \& Cuturi, M. (2019). Computational optimal transport with applications to data sciences. Foundations and trends in machine learning, 11(5-6), 355-607.

\bibitem{PyScWaZa}
Pyrga, E., Schulz, F., Wagner, D., \& Zaroliagis, C. (2008). Efficient models for timetable information in public transportation systems. Journal of Experimental Algorithmics (JEA), 12, 1-39.

\bibitem{ScCaDiOaNaRa}
Schieber, T. A., Carpi, L., Díaz-Guilera, A., Pardalos, P. M., Masoller, C., \& Ravetti, M. G. (2017). Quantification of network structural dissimilarities. Nature communications, 8(1), 13928.

\bibitem{WuChHuKeLuXu}
Wu, H., Cheng, J., Huang, S., Ke, Y., Lu, Y., \& Xu, Y. (2014). Path problems in temporal graphs. Proceedings of the VLDB Endowment, 7(9), 721-732.

\bibitem{ZhLiWaWaHoZh}
Zhan, X. X., Liu, C., Wang, Z., Wang, H., Holme, P., \& Zhang, Z. K. (2025). Measuring and utilizing temporal network dissimilarity. Communications Physics, 8(1), 40.

\end{thebibliography}
\end{document}